
\documentclass{iucr}              

     \journalcode{J}              
\usepackage{url}
\usepackage{graphicx}
\usepackage{tabularx}
\usepackage{amsmath,amstext,amssymb}
\usepackage{xspace}

\usepackage{pifont}
\newcommand{\cmark}{\ding{51}}%

\newcommand{\mcrot}[4]{\multicolumn{#1}{#2}{\rlap{\rotatebox{#3}{#4}~}}}
\newcommand{\vectorspace}[2]{\ensuremath{\mathbb{#1}^{#2}}\xspace}
\newcommand{\point}[1]{\ensuremath{p_{#1}}}
\newcommand{\set}[1]{\ensuremath{\mathbf{#1}}}
\newcommand{\pointset}{\ensuremath{\set{P} = \{\point{1}, \point{2}, \ldots, \point{N} \}}}
\newcommand{\sothree}{\ensuremath{SO(3)}\xspace}
\newcommand{\dilog}[1]{\ensuremath{\operatorname{Li}_2\left({#1}\right)}\xspace}
\newcommand{\radius}{\ensuremath{\theta}\xspace}
\newcommand{\Bstrut}[1]{\rule[- #1 ex]{0pt}{0pt}}   

\usepackage{amsthm}
\newtheorem*{theorem*}{Theorem}

\newcommand{\shortcite}[1]{\cite{#1}}

\begin{document}                  



\title{Improved Orientation Sampling for Indexing Diffraction Patterns of Polycrystalline Materials}
\shorttitle{Improved Orientation Sampling}


\cauthor{Peter Mahler}{Larsen}{pmla@fysik.dtu.dk}{}
\author{S\o ren}{Schmidt}

\aff{Department of Physics, Technical University of Denmark, 2800 Kongens Lyngby, \country{Denmark}}









\maketitle                        

\begin{synopsis}
A method for generating high-quality discretizations of \sothree is described and compared with existing methods.
\end{synopsis}

\begin{abstract}
Orientation mapping is a widely used technique for revealing the microstructure of a polycrystalline sample.
The crystalline orientation at each point in the sample is determined by analysis of the diffraction pattern, a process known as \emph{pattern indexing}.
A recent development in pattern indexing is the use of a brute-force approach, whereby diffraction patterns are simulated for a large number of crystalline orientations, and compared against the experimentally observed diffraction pattern in order to determine the most likely orientation.  Whilst this method can robust identify orientations in the presence of noise, it has very high computational requirements.
In this article, the computational burden is reduced by developing a method for nearly-optimal sampling of orientations.
By using the quaternion representation of orientations, it is shown that the optimal sampling problem is equivalent to that of optimally distributing points on a four-dimensional sphere.  In doing so, the number of orientation samples needed to achieve a indexing desired accuracy is significantly reduced.
Orientation sets at a range of sizes are generated in this way for all Laue groups, and are made available online for easy use.
\end{abstract}


\section{Introduction}
\label{sec:introduction}

In many types of diffraction experiments, the aim is to determine the orientation of the diffracted crystallite volume which creates the experimentally observed pattern.  For example, when studying a multigrain sample with the 3DXRD technique~\cite{poulsen2001three}, a `grain map' is constructured by finding the crystalline orientation at each point in the sample.
The process of determining the crystalline orientation from a diffraction pattern on the detector is known as \emph{pattern indexing}.  Throughout this article, we use the term `experimental pattern' to denote an image of a diffraction pattern as recorded on a detector.

The most widely-used pattern indexing methods work `backwards' from features in the observed data to an orientation.  Such methods are typically highly efficient, but can fail in the presence of noise.
A well-known example is in Electron Backscatter Diffraction (EBSD), where the Hough transform is used to find lines in the backscattered Kikuchi pattern, from which the orientation can be determined~\cite{adams1993orientation}.  Under noisy conditions, however, the Kichuchi lines can no longer be reliably identified and the indexing process fails as a consequence.

The desire to analyze diffraction patterns under less-than-ideal conditions has motivated the development of forward modelling based pattern indexing, also known as dictionary-based indexing.
%
%
%
In a forward model, rather than working backwards from the data, the orientation is found using a brute-force approach.
A dictionary is constructed by selecting a set of orientations, and generating \emph{simulated} patterns for each of them.
A requirement for simulating patterns is that the crystal phase is known \emph{a priori}, or, if indexing a multiphase material, that the set of candidate phases is known.

To index an experimental pattern, it is compared against every simulated pattern in the dictionary, and the dictionary pattern with the highest similarity determines the orientation (in a multiphase material this also determines the phase).
Here, the similarity is determined by the difference in the pixel intensities in the simulated and experimental patterns.  By using the full image information (i.e.~all pixel intensities), the similarity exhibits a continuous degradation with increasing noise, as opposed to the catastrophic degradation exhibited when looking for specific features in the experimental pattern.

A significant drawback of the forward modelling approach, however, is the computational effort required: each experimental pattern must be tested against every dictionary pattern.  Since, the accuracy of the pattern indexing process depends on the granularity of the set of dictionary orientations, a more accurate indexing requires a larger set.  Increasing the number of dictionary orientations, however, increases the time required to index a pattern.  Since the objectives of increased accuracy and reduced running time are in opposition to each other, we ask the question: how can we achieve the highest accuracy with the fewest dictionary orientations?  In this article, we describe a method for doing so with the use of quaternions.


\subsection{Measurement of Dictionary Orientation Sets}
Orientations can be conveniently represented using unit quaternions~\cite{altmann2005rotations}.
Briefly, a quaternion is a four dimensional vector of the form $q = \{ w, ix, jy, kz \}$, where $w$, $x$, $y$ and $z$ are real numbers, and $i$, $j$ and $k$ are imaginary numbers which generalize the better-known complex numbers.
Unit quaternions represent points on a four-dimensional hypersphere, a space formally known as \vectorspace{S}{3} and which consists of all vectors which satisfy $\sqrt{w^2 + x^2 + y^2 + z^2} = 1$.  This space is a double covering of \sothree, the group of rotations in three-dimensional Euclidean space $\left( \vectorspace{R}{3} \right)$.
The double covering relationship means that $-q$ and $q$ represent the same orientation, which is evident when considering the quaternion-derived rotation matrix:
\begin{equation}
\set{U}_q =
\begin{bmatrix}
1 - 2y^2 - 2z^2
&2xy - 2wz
&2xz + 2wy\\
2xy + 2wz
&1 - 2x^2 - 2z^2
&2yz - 2wx\\
2xz - 2wy
&2yz + 2wx
&1 - 2x^2 - 2y^2
\end{bmatrix}
\end{equation}
It can be seen that in each element of $\set{U}_q$, the sign of the quaternion cancels out.
%
%
By using the quaternion representation, the problem of selecting an optimal set of dictionary orientations is equivalent to finding an optimal distribution of a set of points on \vectorspace{S}{3}.  To do so, we must first decide what constitutes a good distribution.

The misorientation between two orientations in quaternion form, $p$ and $q$, is given by:
\begin{equation}
\alpha \left(p, q \right) = 2 \arccos |\langle p, q \rangle|
\end{equation}
where $\langle p, q \rangle$ denotes the inner product of $p$ and $q$.
In many previous studies, dictionary orientation sets are quantified by the misorientation between neighbouring orientations, for example, the average value of $\alpha \left(p, q \right)$ over all pairs of nearest neighbours $p$ and $q$.  This may be  adequate when the orientation set has a known, grid-like structure, but it does not constitute a universal measure of quality.
To illustrate this with a pathological example, consider an orientation set, \set{Q}, where all orientations lie at the same point.  The misorientation between all pairs of orientations is zero, that is
\begin{equation}
\alpha(p, q) = 0 \hspace{4mm} \forall p \in \set{Q}, q \in \set{Q}
\end{equation}
yet the set constitutes the worst possible dictionary.  A good measure of quality should instead consider the misorientation between the dictionary set and any possible experimental orientation.  We define the error term as the maximum misorientation between these two, i.e. how far can an experimental orientation lie from the dictionary?  More specifically, this error term is given by:
\begin{equation}
\alpha_{\text{max}} = \max_{x \in \sothree} \min_{q \in \set{Q}} \alpha \left( x, q \right)
\end{equation}
This quantity can be minimized by solving the \emph{spherical covering problem} in \vectorspace{S}{3}.  Given $N$ hyperspherical caps of equal radius, $r$, called the \emph{covering radius}, the spherical covering problem asks how to arrange the caps to cover the surface of \vectorspace{S}{3} with minimal $r$.  We describe this problem in detail in Section~\ref{sec:theory}.


By creating orientation sets with a small covering radius, we can either reduce the number of orientations required to achieve a desired error tolerance (thereby reducing the running time of forward modelling pattern indexing), or simply improve the error distribution for a fixed number of orientations.  Creation of such sets is the principal contribution of this work.

\subsection{Previous Work}
Forward modelling has been successfully applied in many types of diffraction-based experiments, including the indexing of 3D X-ray diffraction microscopy data~\cite{Li2013,schmidt2014grainspotter}, EBSD data~\cite{chen2015dictionary} and electron channeling patterns~\cite{singh2017dictionary}.  Any forward modelling method requires a discretization of \sothree.  Whilst many such discretization methods have been developed, here we consider only three which are both successful and commonly used amongst crystallographers.

Yershova et al.~\shortcite{Yershova2010} have developed an incremental infinite sequence based on the Hopf fibration.  The method generates orientations deterministically, with proven maximal dispersion reduction when used as a sequence.  Furthermore, the orientation sets are isolatitudinal, which permits expansion into spherical harmonics~\cite{dahms1989iterative}, refinable, and can be generated on-the-fly.  Whilst the method has many desirable properties, it is developed for the purpose of robot motion planning and is not easily integrated with crystallographic fundamental zones.  To remedy this,  Ro\c{s}ca et al.~\shortcite{rosca2014new} have developed `cubochoric' coordinates, in which an area-preserving Lambert projection is used to map points from a cubic grid onto any desired crystallographic fundamental zone in \sothree.  
A different approach, developed by Karney~\shortcite{Karney2007} for use in molecular modelling, is to generate sets which attempt to solve the spherical covering problem.  Inspired by the observation that body-centred cubic (BCC) grids solve the covering problem in \vectorspace{R}{3}, BCC grids are constructed in Rodrigues-Frank (RF) space~\cite{frank1988rf,morawiec1996rodrigues} in order to generate good coverings in \sothree.

%
\begin{table}
\hcaption{Summary of properties of different methods of orientation set generation.  Existing methods prioritize fast generation and a grid-like structure.  In our work we optimize the covering radius at the expense of all other properties.  The optimality gap for a set of $N$ orientations is the percentage difference of its covering radius to that of the simplex bound (c.f.\ Section~\ref{sec:lowerbound}).
$^{1}$Non-isolatitudinal sets do not permit an expansion into spherical harmonics, though any orientation set can be expanded into hyperspherical harmonics~\cite{mason2008hyperspherical,mason2009relationship}.
$^{2}$These orientation sets can be mapped out into 7 of 11 Laue group fundamental zones (c.f.\ Section~\ref{sec:theory_symmetry}).}
\label{table:method_comparison}
\begin{tabular}{lcccccc}
\multicolumn{1}{l}{\textbf{Method}} & \mcrot{1}{c}{60}{Fast generation} & \mcrot{1}{c}{60}{Refinable} & \mcrot{1}{c}{60}{Isolatitudinal} & \mcrot{1}{c}{60}{Isochoric}  & \mcrot{1}{c}{60}{Crystallographic} & \mcrot{1}{c}{60}{Opt. gap at $N \approx 10^5$}\\
\hline
Random sampling				& \cmark 	& \cmark & -$^1$	& -		& \cmark		& $127\%$\\
Hopf fibration				& \cmark 	& \cmark & \cmark		& -		& -		& $59.9\%$\\
Cubochoric					& \cmark 	& \cmark & \cmark		& \cmark	& \cmark				& $40.8\%$\\
Octahedral BCC				& \cmark 	& \cmark & -$^1$	& -		& -$^2$		& $15.4\%$\\
Present work					& - 			& -		 & -$^1$	& -		& \cmark	& $4.64\%$\\
\end{tabular}
\end{table}

Table~\ref{table:method_comparison} summarizes the properties of the different methods of generating orientation sets.
Each of the three aforementioned techniques attempts to solve slightly different problems and involves different trade-offs as a consequence, although one feature they have in common is fast generation.  We take an alternative approach, sacrificing other properties in pursuit of creating the `best' possible orientation sets.  Whilst this approach requires a significant up-front computational effort, this is a good trade-off when the resulting sets will subsequently be used many times.  We emphasize that whilst the orientational error is critical to forward modelling, there are many other sources of error in any modality (see Ram et al.~\shortcite{ram2017error} for a comprehensive analysis in an EBSD context).

The rest of this article is organized as follows: in Section~\ref{sec:theory} we define the spherical covering problem on \vectorspace{S}{d}, show how this relates to the problem of finding an optimal set of orientations,
and derive a conjectured lower bound.  We describe the generation of orientation sets in Section~\ref{sec:method}.  Results on the covering radius and error distributions of the resulting orientation are given in Section~\ref{sec:results}.  Lastly, the advantages and drawbacks of the method presented are discussed in Section~\ref{sec:summary}.

\newcommand{\circumcentre}{\ensuremath{X}\xspace}
\section{Error Quantification of Orientation Sets}
\label{sec:theory}

In order to compare different orientations sets we must define a measure of quality.  Here, we describe the covering radius of a set, which we argue is the canonical error measure since it determines the maximum possible error.
We will first describe the sphere covering problem for Euclidean and spherical geometries, and then show that the problem of generating optimal orientation sets is a special case of the spherical covering problem.

\subsection{Spherical Coverings}

The sphere covering problem is best known in Euclidean geometries.  In $\vectorspace{R}{d}$, it asks `for the most economical way to cover $d$-dimensional space with equal overlapping spheres'~\cite{conway1998sphere}.  Optimal coverings are known for $d=1$ and $d=2$, which are equally spaced points on a line and a hexagonal lattice, respectively, and optimal \textit{lattice coverings} are known for $1 \leq d \leq 5$.

The presence of curvature in spherical geometries renders the covering problem vastly more challenging.  In \vectorspace{S}{d}, the spherical covering problem asks for the most economical way to cover the surface of $\vectorspace{S}{d}$ with equal overlapping hyperspherical caps.  In \vectorspace{S}{1}, the optimal covering is a set of $N$ points with angle $\frac{2\pi}{N}$ between adjacent points.  For $d > 1$, however, there is no general formula for determining the optimal spherical covering.  Furthermore, unlike in \vectorspace{R}{d}, the configuration of the optimal covering depends on the number of points in the covering.  For example, for $d=2$, the known optimal configurations are the vertices of the tetrahedron, the octahedron and the icosahedron.  Hardin et al.~have found \emph{putatively} optimal coverings~\cite{SloaneCoverings} for $d=2$ at other values of $N$, but these have been found using numerical optimization and are not provably optimal.

\subsection{Covering Radius and Covering Density}

For coverings on \vectorspace{S}{d}, the two (equivalent) measures of quality are the covering radius and the covering density.  Given a discrete collection of points $\pointset \in \vectorspace{S}{d}$, the covering radius, $\radius$, is defined as the largest angular distance between any point in \vectorspace{S}{d} and \set{P}, that is
\begin{equation}
\radius = \max\limits_{x \in \vectorspace{S}{d}} \min\limits_{\point{} \in \set{P}} \arccos \langle x, \point{} \rangle
\label{eq:covering_radius}
\end{equation}
where $\langle x, \point{} \rangle$ denotes the inner product of $x$ and $\point{}$.  Then, \set{P} covers the surface of \vectorspace{S}{d} with $N = |\set{P}|$ equal hyperspherical caps of radius $\radius$.  The covering density,  $\tau_{d}(\radius)$, is given by ratio of the sum of the surface area of the caps to the surface area of unit $d$-sphere,
\begin{equation}
\tau_{d}(\radius) = N \frac{C_d(\radius)}{S_d(1)}
\label{eq:spherical_density}
\end{equation}
where
\begin{equation}
C_{d}(\radius) = \int\limits_{0}^{\tan(\radius)} \frac{S_{d-1}(r)}{(1 + r^2)^2} dr,
\hspace{4mm}
S_{d-1}(\radius) = \frac{d\pi^{d/2}}{\Gamma \left(\frac{d}{2} + 1 \right)}\radius^{d-1}
\label{eq:spherical_density2}
\end{equation}
where $S_{d-1}(\radius)$ is the surface area of the $d$-sphere of radius $\theta$ and $C_{d}(\radius)$ is the surface area of a hyperspherical cap of radius $\theta$ (c.f. Appendix \ref{app:simplex_bound_derivation} for derivation).
%
To find the covering radius, we need to determine the Voronoi cell of each point $\point{i} \in \set{P}$.  The Voronoi cell of point $\point{i}$, denoted $\text{Vor}(\point{i})$, consists of all points of $\vectorspace{S}{d}$ that are at least as close to $\point{i}$ as to any other $\point{j}$.  More specifically:
\begin{equation}
\text{Vor}(\point{i}) = \{x \in \vectorspace{S}{d} \mid \arccos \langle x, \point{i} \rangle \leq \arccos \langle x, \point{j} \rangle \;\; \forall j \}
\label{eq:voronoi}
\end{equation}
Since the vertices of the Voronoi cells are the points which locally maximize the angular distance from \set{P}, the covering radius is determined by the Voronoi vertex that lies furthest from \set{P}.

%
\begin{figure}
	\includegraphics[width=\linewidth]{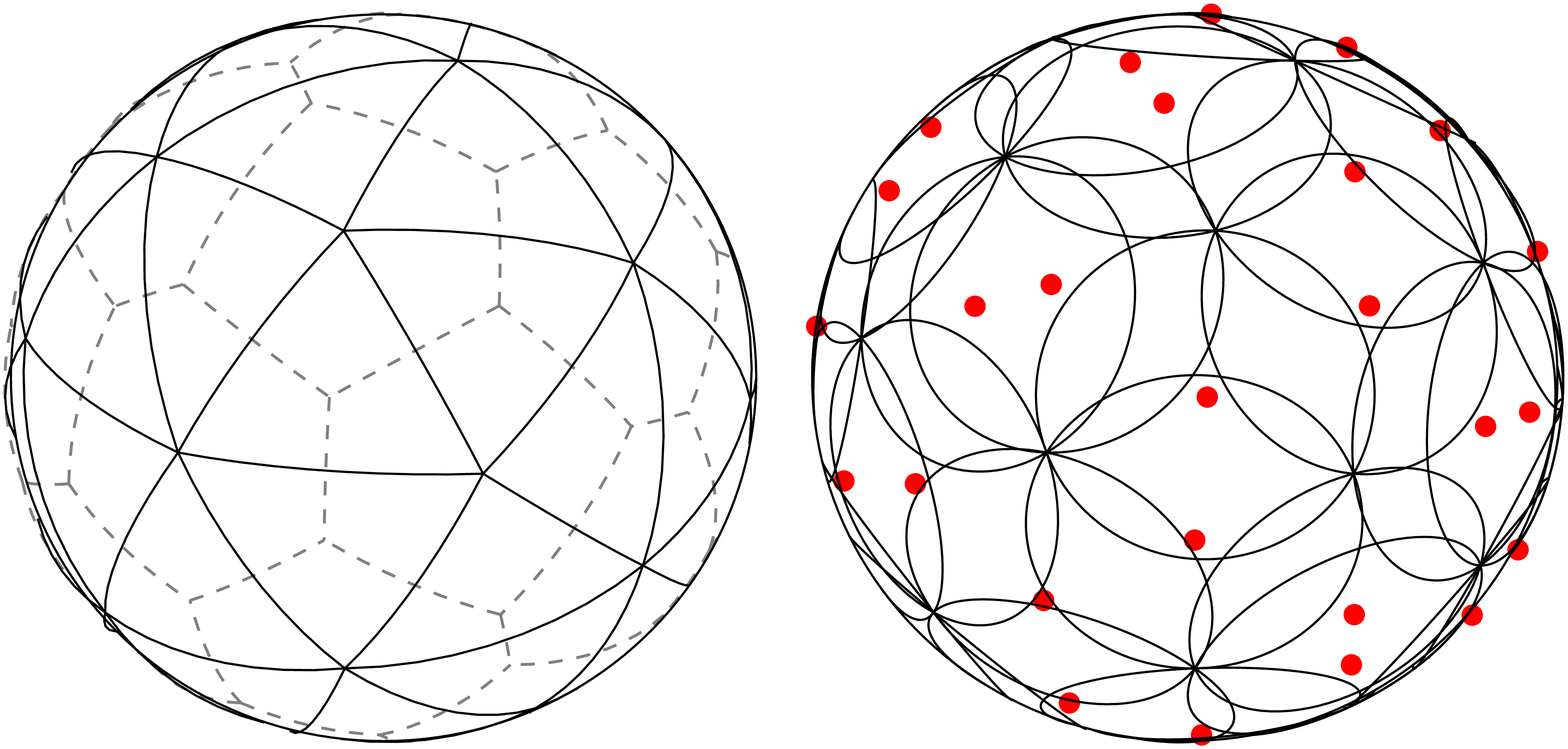}
	\hcaption{\textbf{Left:} a putatively optimal spherical covering for 28 points in \vectorspace{S}{2} (point set due to Hardin et al.~\shortcite{SloaneCoverings}).  The solid lines indicate the spherically constrained Delaunay triangulation.  The dashed lines indicate the Voronoi cells.\\
\textbf{Right:} the same points; each simplex in the Delaunay triangulation has a circumcap, the centre of which (marked in red) lies at a Voronoi cell vertex.  The maximum simplex circumradius determines the covering radius of the point set.}
\label{fig:spherical_covering_circles}
\end{figure}

The Voronoi cells of a set of points in \vectorspace{S}{d} are not easy to calculate directly, so instead we calculate the (hyperspherically constrained) Delaunay triangulation.  The Delaunay triangulation, $DT(\set{P})$, is a set of hyperspherical simplices whose vertices are points in \set{P} which satisfy the empty-sphere condition, that is, no points in \set{P} lie inside the circumhypercap of any simplex in $DT(\set{P})$.  Associated with each simplex is a Voronoi vertex, which lies at the centre of the simplex circumhypercap (the circumcentre).  The Delaunay triangulation, Voronoi cells and simplex circumhypercaps and circumcentres are illustrated in \vectorspace{S}{2} in Figure~\ref{fig:spherical_covering_circles}.  We now show how to calculate the circumcentre of a simplex.
\begin{theorem*}
For a hyperspherical simplex $\set{t} \in DT(\set{P})$ with vertices $\{ \point{1}, \point{2}, \point{3}, \ldots, \point{d+1} \} \in \vectorspace{S}{d}$, the position of the circumcentre, \circumcentre, is equal to the unit normal vector of the $d$-dimensional hyperplane on which the vertices of \set{t} lie.
\end{theorem*}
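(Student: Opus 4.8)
The plan is to recast the geometric definition of the circumcentre as a small linear‑algebra problem. By construction, \circumcentre is the centre of the circumhypercap through the vertices of \set{t}, i.e.\ the point of \vectorspace{S}{d} whose angular distance to each of $\point{1},\ldots,\point{d+1}$ is the same. Since $\arccos$ is strictly decreasing on $[-1,1]$, equal angular distances is equivalent to the inner products $\langle \circumcentre, \point{i}\rangle$ all taking a common value $c$. So the first step is to observe that locating \circumcentre amounts to solving
\begin{equation}
\langle \circumcentre, \point{i} \rangle = c \quad (i = 1, \ldots, d+1), \qquad \langle \circumcentre, \circumcentre \rangle = 1
\end{equation}
for the unknowns $\circumcentre \in \vectorspace{R}{d+1}$ and $c \in \mathbb{R}$.

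The key step is then to bring in the hyperplane. Because \set{t} is a non-degenerate simplex in $DT(\set{P})$, its $d+1$ vertices are affinely independent, so their affine hull is a genuine hyperplane $H \subset \vectorspace{R}{d+1}$, which may be written $H = \{ y \mid \langle y, n \rangle = h \}$ with $n$ a unit normal and $h \ge 0$ the distance from the origin to $H$. Every vertex lies on $H$, so $\langle n, \point{i} \rangle = h$ for all $i$, while $\langle n, n \rangle = 1$; hence $(\circumcentre, c) = (n, h)$ satisfies the system above. Thus $n \in \vectorspace{S}{d}$ is equidistant from all vertices of \set{t}, with common angular distance $\arccos h$ — in other words $n$ is a circumcentre and $\arccos h$ is the circumradius, which is exactly the claim (the associated Voronoi vertex, cf.\ \eqref{eq:voronoi}, sitting at this point).

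It remains to check that this is \emph{the} circumcentre, i.e.\ to pin down uniqueness and the sign. Treating $c$ as an extra unknown, the equations $\langle \circumcentre, \point{i} \rangle = c$ are linear in $(\circumcentre, c)$, and affine independence of the vertices forces their solution set to be one-dimensional; imposing $\langle \circumcentre, \circumcentre \rangle = 1$ then leaves exactly the antipodal pair $\circumcentre = \pm n$ (with $c = \pm h$), whose two complementary circumcaps share the boundary $H \cap \vectorspace{S}{d}$. The circumcentre of \set{t} is the centre of whichever cap is the empty one, and for any point set whose covering radius satisfies $\radius \le \pi/2$ this is necessarily the cap of radius $\arccos h \le \pi/2$, i.e.\ $\circumcentre = n$ oriented so that $\langle n, \point{i} \rangle = h \ge 0$ (the degenerate case $h=0$, where $H$ passes through the origin and both antipodes are legitimate circumcentres at distance exactly $\pi/2$, does not occur for simplices contained in an open hemisphere). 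I expect this last bookkeeping — ruling out any further equidistant point and selecting the correct one of $\pm n$ — to be the only real subtlety, since the verification that $n$ itself is equidistant is immediate from $\point{i} \in H$.
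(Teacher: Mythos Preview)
Your argument is correct and is essentially the same as the paper's: both reduce the circumcentre condition ``equal inner product with every vertex'' to the hyperplane condition $\langle \circumcentre, \point{i}\rangle = \text{const}$, i.e.\ orthogonality of \circumcentre to the differences $\point{i}-\point{1}$, and then normalize. The paper runs the implication from circumcentre to normal (via $s_i\cdot\circumcentre=0$) while you run it from normal to circumcentre, but the content is identical; your discussion of the $\pm n$ ambiguity and the empty-cap selection is more explicit than the paper's, which defers that point to a remark after the proof.
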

\begin{proof}
Let $\set{S} = \{ s_1 = \point{2} - \point{1}, s_2 = \point{3} - \point{1}, s_3 = \point{4} - \point{1}, \ldots, s_{d} = \point{d+1} - \point{1}\}$ and let $\circumcentre \in \vectorspace{S}{d}$ be the circumcentre of \set{t}.  Then, per definition, \circumcentre must satisfy:
\begin{equation}
\point{i} \cdot \circumcentre = \point{1} \cdot \circumcentre \;\;\;\; \forall i
\label{eq:circumcentre_definition}
\end{equation}
Subtracting $\point{1} \cdot \circumcentre$ from each side gives:
\begin{equation}
s_{i} \cdot \circumcentre = 0 \;\;\;\; \forall i
\label{eq:circumcentre_difdot}
\end{equation}
The unit length of $X$ follows from requiring $X \in \vectorspace{S}{d}$.
\end{proof}
To find \circumcentre, we calculate the normalized $d$-fold vector cross product~\cite{brown1967vector} of \set{S}.  Since every hyperplane has two (opposite) unit plane normals, $X$ has two solutions, which correspond to the centre of the simplex hypercircumcap and its dual.  However, given that $|\set{P}| \geq d+2$  only one of these solutions fulfils the empty-sphere condition, which is the one which satisfies: $\left\langle X, \point{i} \right\rangle > 0 \;\;\forall i$.  This corresponds to the smaller of the two hypercircumcaps.

%
\begin{figure}
	\includegraphics[width=\linewidth]{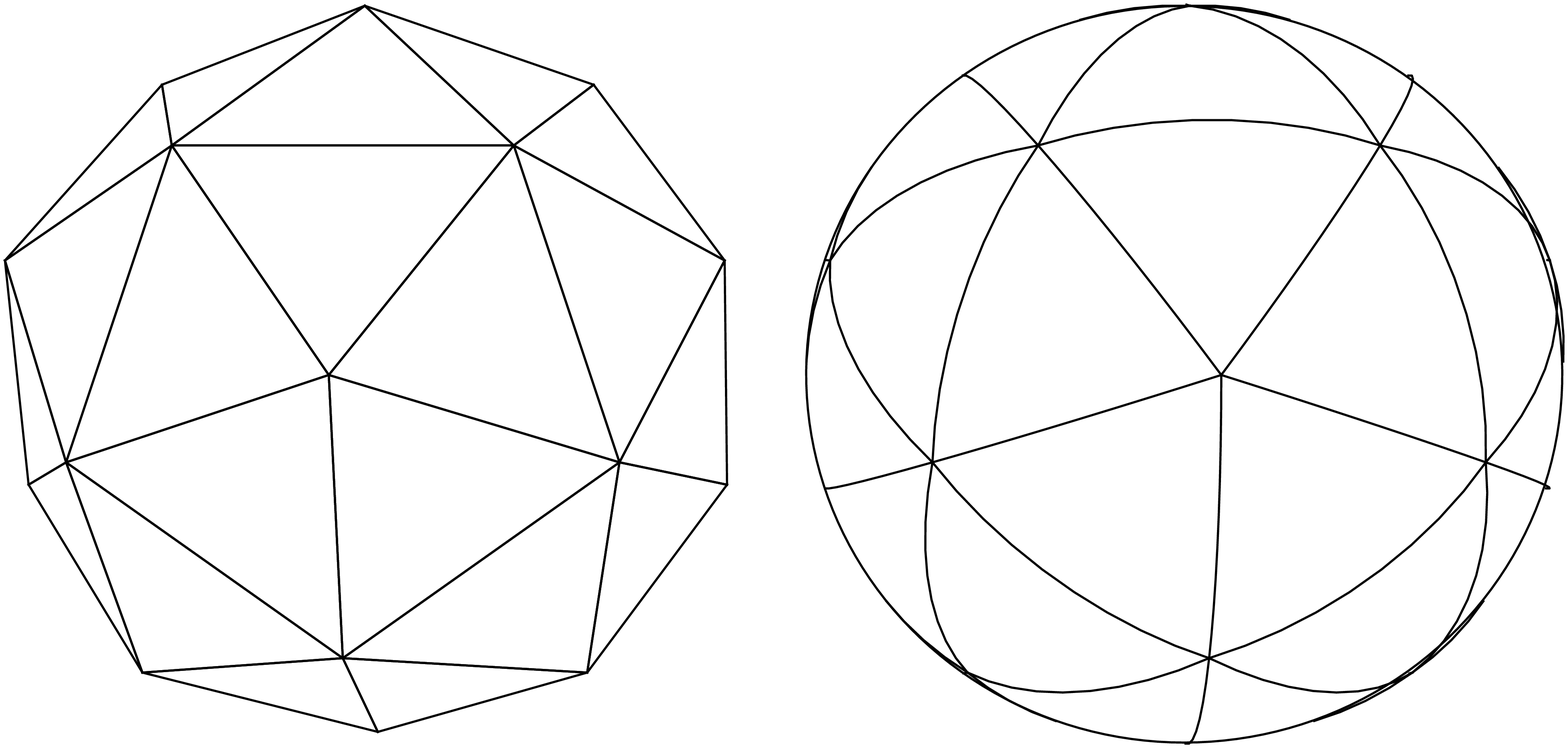}
	\hcaption{Convex hull (left) and the spherically constrained Delaunay triangulation (right) of 22 points on the sphere. The triangulations exist in \vectorspace{R}{3} and \vectorspace{S}{2} respectively, but the vertices of each simplex are the same.  Data due to Hardin et al.~\shortcite{SloaneCoverings}.}
\label{fig:spherical_convex_hull}
\end{figure}

For a set of points on \vectorspace{S}{d}, the vertices of each simplex $\set{t} \in DT(\set{P})$ can be found by calculating the convex hull of \set{P}, as shown in Figure~\ref{fig:spherical_convex_hull}.  If we denote the circumradius of a simplex \set{t} by $\phi(\set{t})$, Equation~(\ref{eq:covering_radius}) can be restated as:
\begin{equation}
\radius = \max\{\phi(\set{t}) \mid \set{t} \in DT(\set{P}) \}
\label{eq:covering_radius_measurement}
\end{equation}
which provides a practical solution to Equation~(\ref{eq:covering_radius}): the covering radius of a point set is simply the maximum simplex circumradius.

\subsection{Orientation Sets}
\label{sec:orientations_sets}

The problem of finding a good spherical covering is immediately relatable to the problem of finding good sets of orientations.  As described previously, rotations can be represented by quaternions, which are points on \vectorspace{S}{3}.
The maximum rotational angle between a point $x \in \sothree$ and a point set $\set{P}$, also called the maximum \textit{misorientation}, is given by:
\begin{equation}
\begin{array}{l}
\alpha_{\text{max}} = 2 \max\limits_{x \in \vectorspace{S}{3}} \min\limits_{\point{} \in \set{P}} \min \left[ \arccos \langle -x, \point{} \rangle, \arccos \langle x, \point{} \rangle \right] \\
\phantom{\alpha_{\text{max}}} =2 \max\limits_{x \in \vectorspace{S}{3}} \min\limits_{\point{} \in \set{Q}}\arccos \langle x, \point{} \rangle
\end{array}
\label{eq:misorientation}
\end{equation}
where
$\set{Q} = \set{P} \cup \{ -\point{} \mid \point{} \in \set{P}\}$.
It can be seen that, for a point set with antipodal symmetry, $\alpha_{\text{max}} = 2 \theta$, that is, the maximum misorientation is twice the covering radius.  Thus, the problem of finding a set of rotations with the lowest maximum misorientation is equivalent to finding an optimal spherical covering for a point set with antipodal symmetry on \vectorspace{S}{3}.

\subsection{Integration with Crystallographic Symmetries}
\label{sec:theory_symmetry}

Equation~(\ref{eq:misorientation}) shows that a set of $2N$ points with antipodal symmetry represents a set of $N$ rotations.  A set of orientations generated in this way covers the whole space of \sothree, and is immediately applicable to pattern indexing of materials with triclinic ($C_1$) Bravais lattices.  For materials with higher order symmetry, though, a dictionary set which covers all of \sothree is wasteful, since only the fundamental zone orientations~\cite{he2007representation} are needed.  A naive approach for selecting fundamental zone orientations is to generate a full covering of \sothree and then simply `cut out' the desired region; this introduces artifacts at the boundaries of the fundamental zone which increase the covering radius significantly.  Instead, we apply the symmetry of the desired point group during generation of the orientation sets.

Given a set of basis points $\set{B} = \{b_{1}, b_{2}, \ldots \}$ and a quaternion group $\set{G} = \{g_{1}, g_{2}, \ldots \}$, we can create a set of points with the symmetry of \set{G} by:
\begin{equation}
\set{P} = \{ b \otimes g \mid b \in \set{B}, g \in \set{G} \}
\label{eq:symmetry}
\end{equation}
where $\otimes$ denotes quaternion multiplication.
If \set{P} is to represent a set of orientations (c.f.~Equation~(\ref{eq:misorientation})), \set{G} must be a superset of antipodal symmetry ($C_1$).  The finite quaternion groups which meet this requirement are~\cite{conway2003quaternions}:
\begin{center}
\begin{tabular}{cl}
$2I_{60}$ & The binary icosahedral group\\
$2O_{24}$ & The binary octahedral group\\
$2T_{12}$ & The binary tetrahedral group\\
$2D_{n}$ & The binary dihedral group\\
$2C_{n}$ & The binary cyclic group\\
\end{tabular}
\end{center}
With the exception of the binary icosahedral group, each of these is used to describe the generators of the 11 Laue groups~\cite{morawiec2003orientations}, $C_1, C_2, C_3, C_4, C_6, D_2, D_3, D_4, D_6, T$ and $O$.
By the application of a symmetry group, the problem of finding a good spherical covering for a chosen crystallographic fundamental zone is reduced to a problem of finding an optimal configuration of the \emph{basis} points, which is a much smaller problem.

The Laue groups can be divided into two sets:
\begin{equation}
\{	C_2, C_4, D_2, D_4, T, O\}
\end{equation}
and
\begin{equation}
\{C_3, C_6, D_3, D_6\}
\end{equation}
where the elements of each are subsets of $O$ and $D_6$ respectively ($C_1$ is trivially a subset of both).  This means that, if we generate sphere coverings with $O$ and $D_6$ applied according to Equation~(\ref{eq:symmetry}), then by an appropriate mapping of the fundamental zone orientations we obtain sphere coverings for \emph{all} Laue groups, without the aforementioned boundary artifacts.  The Laue group subset relationships are shown in Appendix~\ref{sec:laue_tables}.

\subsection{Derivation of the Simplex Bound on \vectorspace{S}{3}}
\label{sec:lowerbound}

In addition to knowing the covering radius and density of a point set, it is useful to know how far from optimality a set is.  We can estimate the optimality gap with a lower bound.

The simplex bound is a classic result which gives an upper bound on the density of sphere packings, and a lower bound on the density of sphere coverings.  It has been proven for packings in \vectorspace{R}{d}~\cite{rogers1958packing} and \vectorspace{S}{d}~\cite{boroczky1978packing}, and for coverings in \vectorspace{R}{d}~\cite{coxeter1959covering} and \vectorspace{S}{2}~\cite{toth1964regular}. 
B\"or\"oczky has conjectured that it is a lower bound on \vectorspace{S}{3}~\cite{boroczky2004finite}.  Despite lacking a proof, we will use the simplex bound on \vectorspace{S}{3} to estimate the optimality of our point sets, as it is `intuitively obvious'.

%
\begin{figure}
    \centering
    	\includegraphics[width=\linewidth]{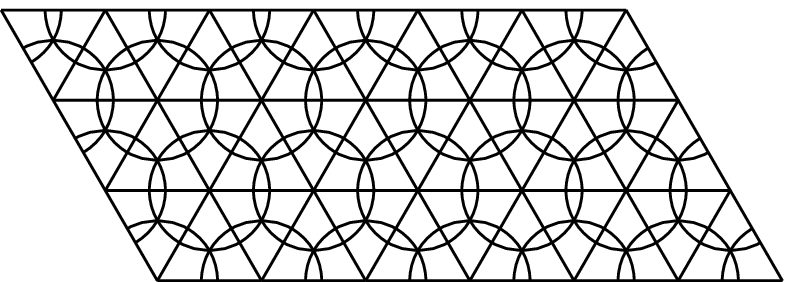}
	\hcaption{Illustration of the simplex bound in \vectorspace{R}{2}, shown here due to the difficulty of visualizing the simplex bound in \vectorspace{S}{3}.  Regular simplices in \vectorspace{R}{2} are equilateral triangles, which tessellate.  At the vertices of each triangle (of circumradius $r$) is a circle of radius $r$.  The area of intersection between a triangle and a circle is a circular sector of angle $\frac{\pi}{3}$.  Each triangle is covered by three equal areas of intersection.  The covering density is therefore the ratio of the sum of the three areas of intersection to the area of the triangle: $\tau_{\vectorspace{R}{2}} = \frac{2\pi}{3\sqrt{3}}$.  In \vectorspace{R}{d} the covering density is independent of $r$, which is not the case in \vectorspace{S}{d} for $d \geq 2$ due to a lack of tesselation.}
\label{fig:simplex_bound_r2}
\end{figure}

The premise of the simplex covering bound is that the lowest covering density can be achieved with regular simplices; this concept is illustrated in Figure~\ref{fig:simplex_bound_r2}.  Regular simplices tesselate
in \vectorspace{R}{1} and \vectorspace{R}{2}.  In \vectorspace{R}{d} for $d \geq 3$, regular simplices do not tesselate, and thus the simplex covering density is an unattainable lower bound.  As stated previously, regular simplices tesselate in \vectorspace{S}{2} for three configurations: the tetrahedron, the octahedron and the icosahedron.  Thus, the simplex bound is tight for these configurations only, and is provably unattainable for any other number of vertices.  In \vectorspace{S}{3}, regular simplices tesselate only in the 5-cell, the 16-cell and the 600-cell.  If B\"or\"oczky's conjecture is correct, the simplex bound is tight only for these configurations. Since no description of the simplex bound covering density on \vectorspace{S}{3} could be found in the literature, we derive an expression for it here.

Given a hyperspherical cap on \vectorspace{S}{3} of radius \radius and volume $C_{3}(\radius)$, we denote the inscribed regular spherical tetrahedron $T(\radius)$.  At each of the four vertices of $T(\radius)$ is a hyperspherical cap of radius \radius.  Each of these caps intersects $T(\radius)$ with solid angle $\Omega(\radius)$, giving a volume of intersection of $C_{3}(\radius)\frac{\Omega(\radius)}{4 \pi}$. Now $T(\radius)$ is covered by the four equal volumes of intersection.
The covering density, $\tau_{\vectorspace{S}{3}}$, is the ratio of the sum of the four volumes of intersection to the volume of $T(\radius)$:
\begin{equation}
\tau_{\vectorspace{S}{3}}(\radius) = 4 \; C_{3}(\radius) \; \frac{\Omega(\radius)}{4 \pi} \frac{1}{\text{Vol}(T(\radius))}
\end{equation}
where:
\begin{equation}
C_{3}(\radius) = \pi (2\radius - \sin(2 \radius)) \label{eq:density_bvol}
\end{equation}
\begin{equation}
\Omega(\radius) = 3\psi(\radius) - \pi \label{eq:density_solidangle}
\end{equation}
\begin{equation}
\psi(\radius) = \arccos\left( \frac{4\cos^2(\radius) - 1}{8\cos^2(\radius) + 1} \right) \label{eq:density_dihedral}
\end{equation}
\begin{multline}
\text{Vol}(T(\radius)) = \biggl( -\operatorname{Re}(L) + \pi (\arg(-Q)\\+ 3 \psi(\radius)) - \frac{3}{2}\pi^2 \biggr) \mod 2 \pi^2 \label{eq:density_tet0}
\end{multline}
\begin{equation}
Q = 3e^{-2i\psi(\radius)} + 4e^{-3i\psi(\radius)} + e^{-6i\psi(\radius)} \label{eq:density_tet1}
\end{equation}
\begin{multline}
L = \frac{1}{2} \biggl[ \dilog{Z_0} + 3\dilog{Z_0 e^{-4i\psi(\radius)}}\\ - 4\dilog{-Z_0 e^{-3i\psi(\radius)}} - 3\psi(\radius)^2 \biggr] \label{eq:density_tet2}
\end{multline}
\begin{multline}
Z_0 = \frac{-6\sin^2(\psi(\radius))}{Q}\\ + \frac{2\sqrt{(\cos(\psi(\radius))+1)^3 (1 - 3\cos(\psi(\radius)))}}{Q} \label{eq:density_tet3}
\end{multline}
where $\psi(\radius)$ is the dihedral angle of $T(\radius)$.
The terms in Equations~(\ref{eq:density_bvol}) - (\ref{eq:density_dihedral}) are derived in \ref{app:simplex_bound_derivation}.  Equations (\ref{eq:density_tet0}) - (\ref{eq:density_tet3}) are a simplification of Murakami's formula for the volume of a spherical tetrahedron~\cite{murakami2012volume}, for the case where all six dihedral angles are equal (a regular spherical tetrahedron).

The covering density can be used to estimate the optimality gap of a point set.  For a set of $N$ points with covering radius \radius, the lower bound on the covering radius $\radius^*$ can be found by rearranging the density expression in Equation~(\ref{eq:spherical_density}):
\begin{equation}
N = \frac{2\pi^2 \tau_{\vectorspace{S}{3}}(\radius^*)}{C_{3}(\radius^*)}
\end{equation}
where $2\pi^2$ is the surface area of \vectorspace{S}{3}.  The optimality gap of the point set is then $\radius / \radius^* - 1$.  Since $\tau_{\vectorspace{S}{3}}(\radius^*)$ is a nontrivial expression, we find $\radius^*$ numerically.

\section{Method of Orientation Set Generation}
\label{sec:method}

We now describe the method for generating point sets with small covering radii.  The direct problem formulation with the application of symmetry is shown in Table~\ref{model:direct_model}.  This is essentially just a restatement of Equations~(\ref{eq:covering_radius_measurement}) and (\ref{eq:symmetry}).

%
\begin{table}
\begin{tabularx}{\textwidth}{ll}
\hline
\\
\textbf{Variables:}& $\set{B} = \{b_{1} \in \vectorspace{S}{3}, b_{2} \in \vectorspace{S}{3} \ldots \}$\Bstrut{3}\\
\textbf{Parameters:}	& $\set{G} = \{g_{1} \in \vectorspace{S}{3}, g_{2} \in \vectorspace{S}{3}, \ldots \}$\Bstrut{3}\\
\textbf{Minimize:}& $\radius = \max\{\phi(\set{t}) \mid \set{t} \in DT(\set{P}) \}$\Bstrut{3}\\
\textbf{Subject to:}	& $\set{P} = \{ b \otimes g \mid b \in \set{B}, g \in \set{G} \}$\Bstrut{2}\\
\hline
\end{tabularx}
\hcaption{Direct model for minimizing the covering radius of a point set in \vectorspace{S}{3}.  The point set \set{P} is composed of a basis set, \set{B}, on which a chosen symmetry group, \set{G}, acts.  The covering radius, $\radius$, is calculated using the Delaunay triangulation of \set{P}.\label{model:direct_model}}
\end{table}

The problem of finding optimal spherical coverings is difficult; in addition to being a NP-hard problem~\cite{van1981another}, the objective function is non-differentiable, and the `fitness landscape' is non-convex and has many local minima.  One possible solution approach (used by Hardin et al.~\shortcite{SloaneCoverings} to generate coverings in \vectorspace{S}{2}) is to use direct search.  This overcomes the non-differentiability of the objective function, but repeated solution from many different starting configurations is required
to find the globally optimal configuration.
Furthermore, due to the poor scaling of direct search methods with increasing problem size, this approach is not practical since we wish to create very large orientation sets.

Since it is unlikely that we will find globally optimal solutions for large point sets with direct search, we will instead attempt to find good solutions with an indirect method.  We proceed as follows: an initial set of orientations is created by sampling randomly from a uniform distribution on \sothree~\cite{Shoemake1992}.  The covering radius is then succesively reduced, firstly by using gradient descent to find a configuration which is a local minimizer of the Riesz energy.  Secondly, a smoothing procedure is used to improve the characteristics of the Delaunay triangulation.  Lastly, a local optimization procedure is used to further refine the solution.  We present no theoretical basis for the choice of methods, nor for the order in which the methods are applied.  Rather, empirical experimentation has shown that the method is effective and produces point sets with a small covering radius.

The motivation for choosing these methods is illustrated in Figure~\ref{fig:process_illustration}.
The random point set has a large covering radius.  By minimizing the Riesz energy the covering radius is significantly reduced.  The covering radius can be further reduced as shown in the optimal covering.  The effect of the smoothing procedure is not shown here, as it is visually very similar to the Riesz energy and optimal covering configurations.
In the rest of this Section we describe each method in detail.

\begin{figure}
\centering
\includegraphics[width=\textwidth]{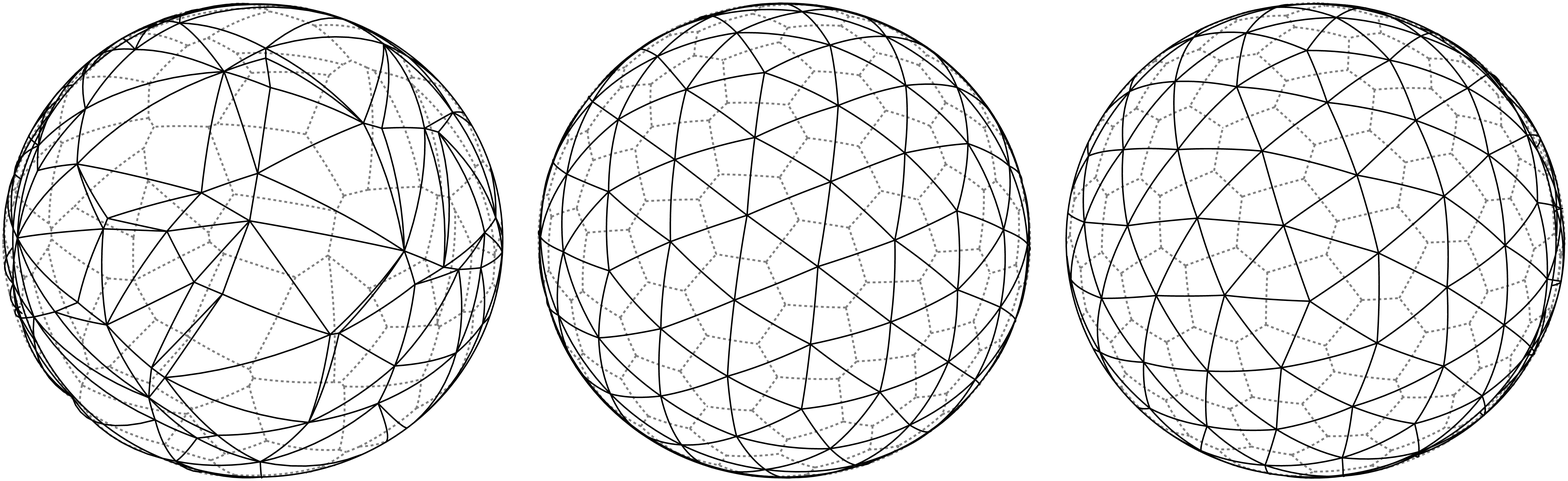}
\hcaption{Delaunay triangulations and Voronoi cells of three different point sets in \vectorspace{S}{2}, for $N=130$. \textbf{Left:} Points sampled uniformly from a random distribution. \textbf{Centre:} The global minimum configuration for the Riesz energy, here for $s=1$ (point set due to Wales \& Ulker~\shortcite{wales2006structure}).  \textbf{Right:} Putatively optimal spherical-covering configuration (point set due to Hardin et al.~\shortcite{SloaneCoverings}).  Point sets in \vectorspace{S}{2} are used here for illustrative purposes only, due to the difficulty of visualizing \vectorspace{S}{3}.}
\label{fig:process_illustration}
\end{figure}

\subsection{Riesz Energy Minimization}

For a set of points $\pointset \in \vectorspace{S}{d}$, the Riesz energy is defined as:
\begin{equation}
E_s(\mathbf{P}) =
\left\{ \begin{array}{ll}
\sum\limits_{i \neq j}^{N} \frac{1}{|\point{i} - \point{j}|^s} & \text{if}\ s > 0 \\
\sum\limits_{i \neq j}^{N} \log \frac{1}{|\point{i} - \point{j}|} & \text{if}\ s = 0 \\
\end{array}\right.
\end{equation}

The problem of finding optimal Riesz energy configurations is well studied, most commonly for $(d=3, s=1)$ (also known as the Thomson problem)~\cite{erber1991equilibrium, altschuler1994method, wales2006structure}, but also for $(d=4, s=1)$~\cite{altschuler2007symmetric}, and in the general case~\cite{hardin2004discretizing, rakhmanov1994electrons}.  The sphere-packing problem is equivalent to solving for $s=\infty$.

Cohn \& Kumar~\shortcite{Cohn2007} have shown that there exist configurations for certain values of $N$ which are universally optimal, that is, globally optimal solutions for every value of $s$.  The known universally optimal configurations for $d=3$ are the tetrahedron, the 16-cell and the 600-cell. The vertices of these polyhedra are conjectured to be global optima for the sphere-covering problem, since their Delaunay triangulations consist of regular spherical tetrahedra (c.f. Section~\ref{sec:lowerbound}).  However, for any value of $N$ for which a universally optimal configuration does not exist, there is no value of $s$ which for a configuration minimizing $E_s(\mathbf{P})$ guarantees an optimal spherical covering.  As such, we will select a value of $s$ on the following basis:  Kuijlaars et al.~\shortcite{kuijlaars2007separation} have shown that the set of points \set{P} which minimizes $E_s(\mathbf{P})$ is well-distributed when $d - 1 \leq s < d$.  We will select $s=2$ since longer range potentials exhibit fewer local minima~\cite{wales2006structure}.  We have used the PR+ conjugate gradient method~\cite{nocedal1999numerical} to find a local minimum of $E_s(\mathbf{P})$.  The resulting configuration is a good intermediate solution with a small covering radius.

\subsection{Optimal Delaunay Triangulation Smoothing}

Minimizing the Riesz energy of a point set reduces the covering radius whilst considering only the relative positions of the points.  We can obtain a further reduction in covering radius by considering the positions of a point set \textit{and} the simplices in its Delaunay triangulation.  This is a well-studied problem in the computational geometry community known as \emph{tetrahedral meshing}.  Given a set of points sampled from an object (e.g. a teapot model) the objective is to move the points in order to create a `nice' Delaunay triangulation (the mesh) whilst preserving the shape of the object.  Chen~\shortcite{chen2004optimaldt} defines an optimal Delaunay triangulation as a set of points which minimizes the energy function:
\begin{equation}
E_{\text{ODT}} = \frac{1}{d+1} \sum\limits_{i=1...N}\int_{\Omega_i} || p - p_i||^2 dp
\end{equation}
where $\Omega_i$ is the 1-ring of $p_i$ (the volume bounded by $p_i$ and its simplicial neighbours).  Minimization of this energy results in a Delaunay triangulation whose simplices have a low circumradius to inradius ratio.  Alliez et al.~\shortcite{alliez2005variational} have shown that, for a given point, the position which minimizes $E_{\text{ODT}}$ is:
\begin{equation}
p_{i}^{*} = \frac{1}{\text{Vol}\left( \Omega_i \right)} \sum\limits_{\set{t} \in \Omega_i}\text{Vol}(\set{t})C(\set{t})
\label{eq:smoothing_optimal_position_rd}
\end{equation}
where $\text{Vol}(\set{t})$ and $C(\set{t})$ are respectively the volume and circumcentre of simplex \set{t}.
They have shown that the energy can be minimized with guaranteed convergence by alternately constructing the Delaunay triangulation, and moving the vertices to their optimal positions using Equation~(\ref{eq:smoothing_optimal_position_rd}).

For our applications the `object' whose shape we must preserve is simply \vectorspace{S}{3}.  As such, after calculating the optimal vertex position using Equation~(\ref{eq:smoothing_optimal_position_rd}) the vertex position is normalized in order to bring it back onto \vectorspace{S}{3}.  We also calculate $\text{Vol}(\set{t})$ for a spherical tetrahedron~\cite{murakami2012volume} rather than for a Euclidean tetrahedron.  Despite the intended use for Euclidean geometries, we have found that this method works very well in practice in \vectorspace{S}{3}, which is likely due to the small local curvature of \vectorspace{S}{3} for large point sets.

\subsection{Local Refinement}

As a last step in the process of reducing the covering radius, we use an optimization procedure to iteratively refine a succession of local neighbourhoods.  We do so by generalizing the direct problem, by iteratively dividing \set{B} into an active set \set{A} and a constant set \set{C}.  We then minimize the maximum circumradius of the simplices with a vertex in \set{A}.  A description of the optimization problem is given in Table~\ref{model:local_refinement_model}.

%
\begin{table}
\begin{tabularx}{\textwidth}{llr}
\hline
\\
\textbf{Variables:}
& $\set{A} = \{a_{1} \in \vectorspace{S}{3}, a_{2} \in \vectorspace{S}{3}, \ldots \}$ & (1)\Bstrut{3}\\
\textbf{Parameters:}
& $\set{C} = \{c_{1} \in \vectorspace{S}{3}, c_{2} \in \vectorspace{S}{3}, \ldots \}$ & (2)\Bstrut{2}\\
& $\set{G} = \{g_{1} \in \vectorspace{S}{3}, g_{2} \in \vectorspace{S}{3}, \ldots \}$& (3)\Bstrut{3}\\
\textbf{Minimize:}
& $\max\{\phi(\set{t}) \mid \set{t} \in DT(\set{P}) \; \land \; \set{t} \cap \set{A} \neq \emptyset\}$&(4)\Bstrut{3}\\
\textbf{Subject to:}
& $\set{B} = \set{A} \cup \set{C}$&(5)\Bstrut{2}\\
& $\set{P} = \{ b \otimes g \mid b \in \set{B}, g \in \set{G} \}$&(6)\Bstrut{2}\\
\hline
\end{tabularx}
\hcaption{Model for reducing the covering radius of a \emph{local} neighbourhood of a point set.  The point set \set{P} is composed of a basis set, \set{B}, on which a chosen symmetry group, \set{G}, acts.  The basis set, \set{B}, consists of an active set, \set{A}, which defines the local neighbourhood to be optimized, and a constant set, \set{C}, which contains the remaining points.  The covering radius, $\radius$, is again calculated using the Delaunay triangulation, though only of the points which are either active or which share a simplicial neighbour with an active point.}
\label{model:local_refinement_model}
\end{table}

Whilst the smallest active set consists of a single vertex, we find that optimizing the vertices of a whole simplex at a time gives better results.  To do so, we alternately construct the Delaunay triangulation, and then optimize each simplex in turn.  The order in which the simplices are optimized is determined by their circumradius, from largest to smallest.  After each update the chosen symmetry group is reapplied to the basis set in order to maintain a consistent point set.

Since the minimization the maximum value of a set is a non-differentiable objective function, we use the Nelder-Mead method~\cite{nelder1965simplex} to optimize the above function as it is a derivative-free method.  In order to avoid dealing with the implicit constraint $|p| = 1 \;\; \forall p \in \set{P}$, we represent the vertices using RF vectors.  Representing the vertices as RF vectors during optimization has the added benefit of reducing the number of variables, which is particularly advantageous when using the Nelder-Mead method.  Since a RF vector representation of any $180^\circ$ rotation has infinite magnitude, we rotate the local neighbourhood under consideration to $\{1, 0, 0, 0\}$ prior to optimization, and back again after optimization.

\section{Results}
\label{sec:results}

Figure~\ref{fig:hist_comparison} illustrates how each stage of the optimization process affects the solution quality.  The initial random sampling results in a distribution of simplex circumradii that is approximately Gaussian.  Minimization of the Riesz energy significantly reduces the mean and variance of the simplex circumradii, as well as the number of simplices.  The distribution resembles a bimodal Gaussian distribution, which suggests an ordered underlying simplex structure.
Application of ODT smoothing reduces the mean and variance of the of simplex circumradii, and results, again, in an approximately Gaussian distribution.  Lastly, the objective of the local refinement procedure is to minimize the maximum simplex circumradius.  It can be seen that this produces a peak around the maximum circumradius with a tail of smaller circumradii below this.

\begin{figure}
\includegraphics[width=\textwidth]{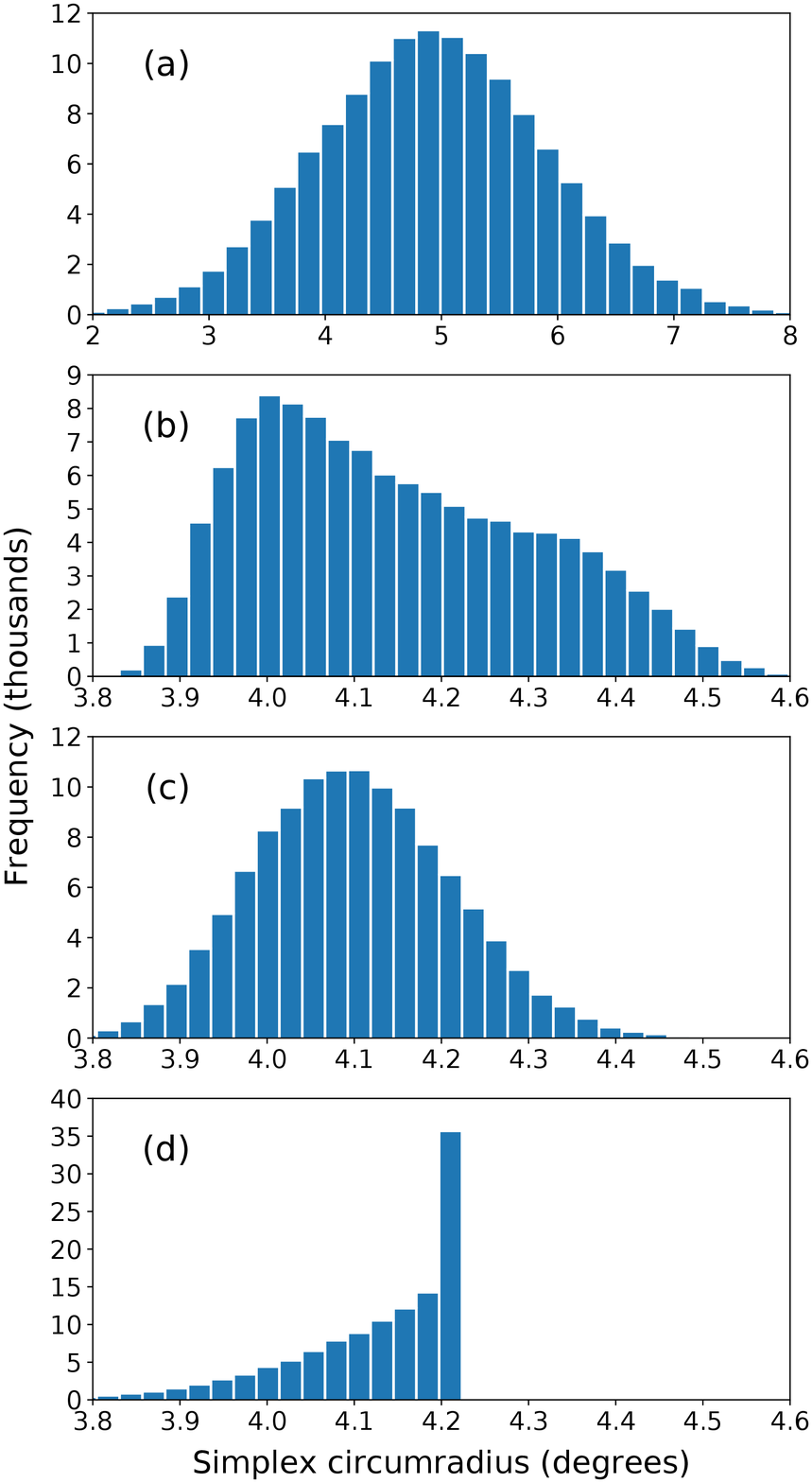}
\hcaption{Histograms showing the change in simplex circumradius at each stage in the optimization for a point set with antipodal symmetry and $N=20000$.  The histograms show the simplex circumradius distribution after: (a) initial random sampling, (b) Riesz energy minimization, (c) ODT smoothing, (d) local refinement.  The maximum circumradius is reduced at every stage.}
\label{fig:hist_comparison}
\end{figure}

Figure~\ref{fig:comparison_all} shows a comparison of our method with the methods discussed in Section~\ref{sec:introduction}, in the range $N=[960, 200000]$.
We have applied $2I_{60}$ symmetry, as it requires a small basis set and thus allows us to quickly generate  coverings of the full space of \vectorspace{S}{3}.
For each value of $N$, we have applied our method from 200 random starting configuration and taken the point set with the lowest covering radius.  It can be seen that the resulting sets have a lower covering radius than the other methods, both at small and large values of $N$.
Furthermore, our method displays a smooth decrease in covering radius with increasing $N$, which is highlighted by the almost constant covering density.  We do not claim optimality for any of our point sets; in most cases the covering radius of best point set was unique amongst the 200 runs.  As such we can conclude that lower covering radii could be obtained simply by increasing the number of runs, though this is very time consuming for large point sets.

%
\begin{figure}
\includegraphics[width=\textwidth]{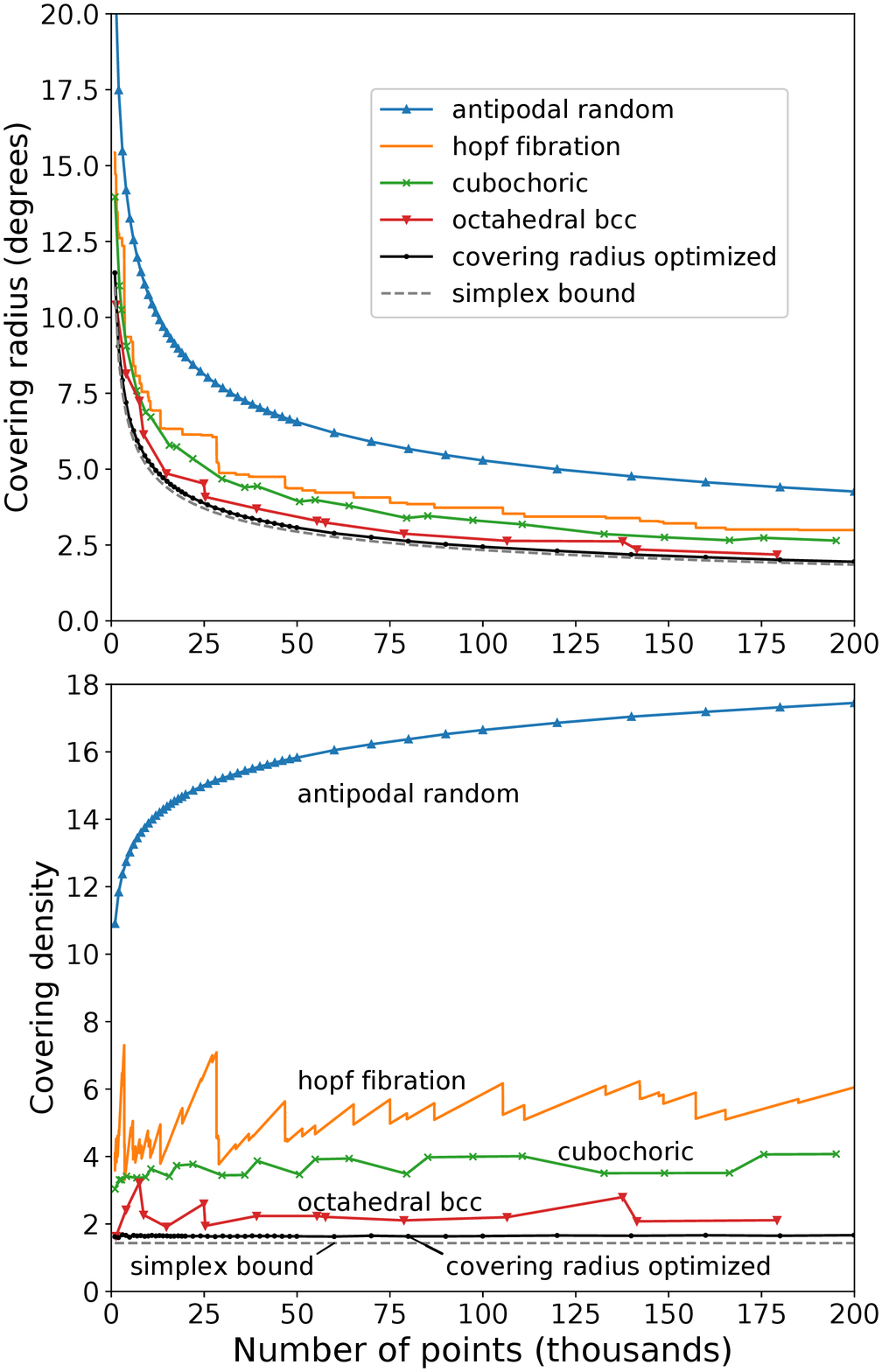}
\hcaption{Comparison of the covering radius (left) and the covering density (right) of random sampling from a uniform distribution with antipodal symmetry, incremental grids based on the Hopf fibration~\cite{Yershova2010}, cubochoric grids~\cite{rosca2014new}, BCC grids with binary octahedral symmetry~\cite{Karney2007}, and our method.  For the random sampling, the mean of $10^5$ runs was used.  For the incremental grids based on the Hopf fibration, the covering radius was calculated at every value of $N$ in the range shown.  For the covering radius optimized point sets (our method), the best result of 200 runs was used.}
\label{fig:comparison_all}
\end{figure}

The optimality gaps of some selected point sets generated using our method are shown in Table~\ref{table:optimality_gaps}.  The gaps are below $6\%$ at every value of $N$.
In the Euclidean limit ($N \rightarrow \infty$) the curvature of \vectorspace{S}{3} in a local area is effectively zero.  For this reason, the optimal covering in a local area should be a BCC lattice, since this is the best known covering in \vectorspace{R}{3}.  Since a BCC lattice has a higher covering density than the simplex bound, the optimality gaps presented here leave room for improvement.

\begin{table}
\hcaption{Conjectured optimality gaps for covering radius optimized configurations, with $2I_{60}$ symmetry applied.  $N$ is the number of points in each set, $\radius$ is the covering radius, $\radius^*$ is the covering radius of the simplex bound on \vectorspace{S}{3}, conjectured to be a lower bound~\cite{boroczky2004finite}.  The optimality gap percentage is $100 \left( \radius / \radius^* - 1 \right)$. $^{\dagger}N=8$ and $N=120$ are the point sets containing the vertices of the 16-cell and 600-cell respectively, included here to highlight the tightness of the simplex bound for point sets consisting of regular tetrahedral cells.}
\label{table:optimality_gaps}
\begin{tabular}{lrrr}
$N$ & $\radius$ & $\radius^*$ & Opt. Gap\\
\hline
$8^{\dagger}$ & $60.00^{\circ}$ & $60.00^{\circ}$ & $0.00\%$ \\
$120^{\dagger}$ & $22.24^{\circ}$ & $22.24^{\circ}$ & $0.00\%$ \\
1920 & $9.05^{\circ}$ & $8.73^{\circ}$ & 3.68\% \\
3960 & $7.20^{\circ}$ & $6.85^{\circ}$ & 5.05\% \\
6000 & $6.27^{\circ}$ & $5.96^{\circ}$ & 5.07\% \\
7920 & $5.71^{\circ}$ & $5.44^{\circ}$ & 4.95\% \\
9960 & $5.27^{\circ}$ & $5.04^{\circ}$ & 4.67\% \\
12000 & $4.96^{\circ}$ & $4.73^{\circ}$ & 4.71\% \\
13920 & $4.72^{\circ}$ & $4.50^{\circ}$ & 4.76\% \\
15960 & $4.50^{\circ}$ & $4.30^{\circ}$ & 4.54\% \\
18000 & $4.33^{\circ}$ & $4.13^{\circ}$ & 4.74\% \\
19920 & $4.18^{\circ}$ & $4.00^{\circ}$ & 4.61\% \\
24000 & $3.93^{\circ}$ & $3.76^{\circ}$ & 4.72\% \\
27960 & $3.72^{\circ}$ & $3.57^{\circ}$ & 4.31\% \\
31920 & $3.56^{\circ}$ & $3.41^{\circ}$ & 4.38\% \\
36000 & $3.43^{\circ}$ & $3.28^{\circ}$ & 4.47\% \\
39960 & $3.31^{\circ}$ & $3.17^{\circ}$ & 4.62\% \\
43920 & $3.21^{\circ}$ & $3.07^{\circ}$ & 4.67\% \\
48000 & $3.11^{\circ}$ & $2.98^{\circ}$ & 4.49\% \\
60000 & $2.89^{\circ}$ & $2.77^{\circ}$ & 4.35\% \\
79920 & $2.63^{\circ}$ & $2.51^{\circ}$ & 4.48\% \\
99960 & $2.44^{\circ}$ & $2.33^{\circ}$ & 4.64\% \\
139920 & $2.19^{\circ}$ & $2.09^{\circ}$ & 4.83\% \\
180000 & $2.01^{\circ}$ & $1.92^{\circ}$ & 4.84\% \\
%
\end{tabular}
\end{table}

\subsection{Practical Application}
The results presented in Figure~\ref{fig:comparison_all} demonstrate the evolution of the different methods with increasing size, though all at small sizes.  For a practical pattern-indexing application, much larger point sets are needed.  Furthermore, whilst the covering radius of a set specifies the maximum error, the distribution of errors is also of practical interest.  Figure~\ref{fig:hist_large} compares the error histograms of a covering radius optimized set and a cubochoric set, which is used for comparison due to its use in the widely used EMsoft microscopy software~\cite{EMsoft}.  In order to generate the error histogram $10^8$ random orientations were sampled; for each sampled orientation, the misorientation is calculated to the nearest orientation in the dictionary set.  A KD-tree~\cite{bentley1975multidimensional} is used to quickly find the closest dictionary orientation.  In addition to a smaller maximum error, the covering-radius optimized set has a better overall error distribution.  This is achieved despite the use of a smaller number of orientations.

\begin{figure}
\includegraphics[width=0.85\textwidth]{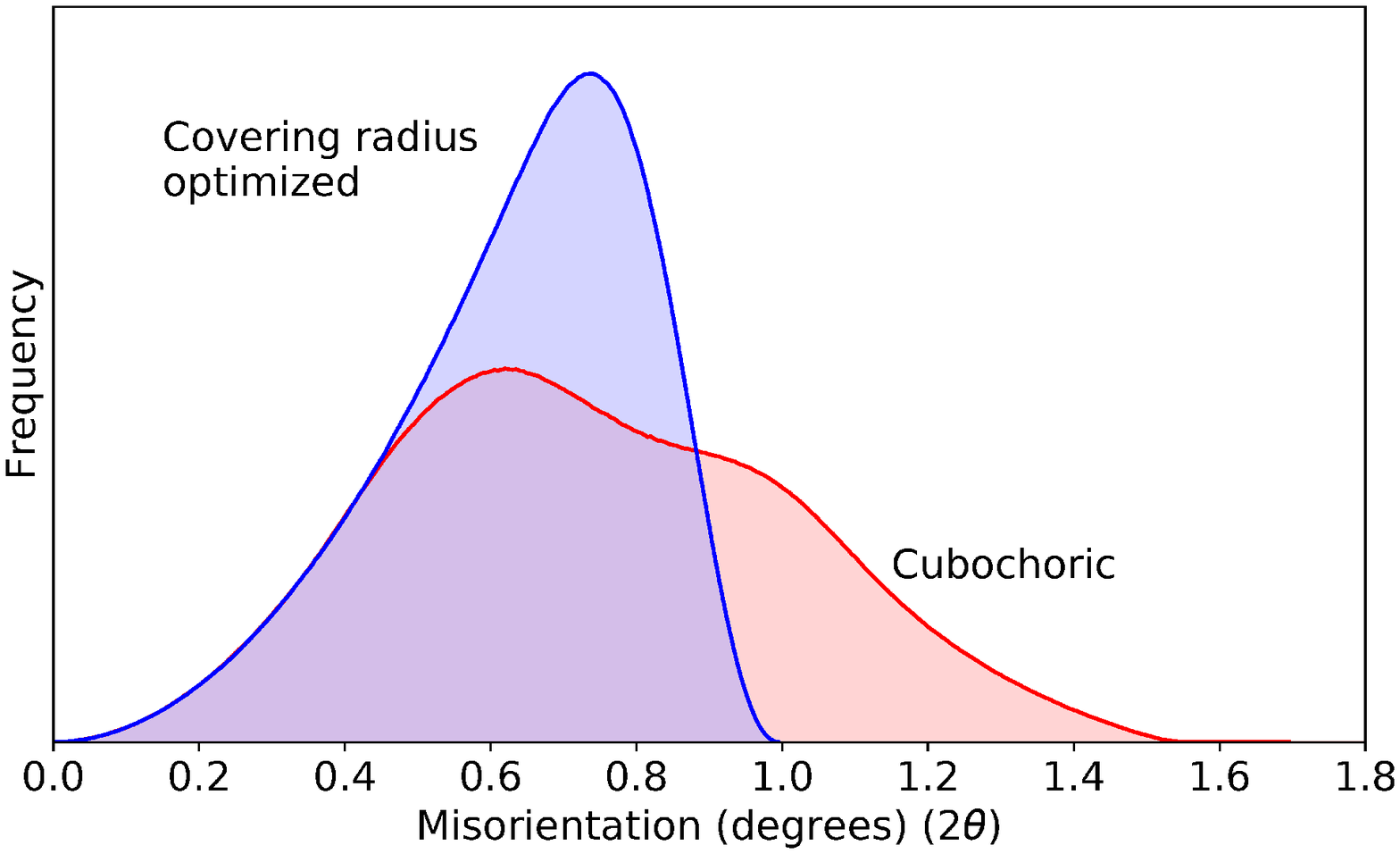}
\hcaption{Error histogram for a covering-radius optimized point set with $6 \times 10^6$ orientations and a cubochoric set with $6.3 \times 10^6$ orientations.  Here, both point sets cover the full space of \sothree, which corresponds to indexing a material with a triclinic crystal lattice.  The covering-radius optimized point set has a lower maximum error ($1.00^\circ$ vs. $1.72^\circ$) and a better overall distribution of errors.}
\label{fig:hist_large}
\end{figure}

The maximum error of the covering radius optimized set is $72\%$ smaller than that of the cubochoric set.  In the Euclidean limit $\theta \propto n^{-1/3}$, which suggests that a cubochoric set would require approximately 5 times as many points to achieve the same maximum error.

Using the symmetry relationships described in Section~\ref{sec:theory_symmetry}, we have created orientation sets for every Laue group with maximum misorientations ($2\theta$) of $<1^\circ, 2^\circ, 3^\circ, 4^\circ$ and $5^\circ$, with optimality gaps less than $6\%$ for every set.  The orientation sets available online~\cite{githubrepository}.

\section{Summary}
\label{sec:summary}

We have shown how to construct a near-optimal sampling of orientations.  First we demonstrated that the sampling problem is equivalent to the problem of how to distribute points on a hypersphere.  We then showed that the best measure of quality for a point set is the covering radius, as this determines the maximum orientational error.  With the minimum covering radius as the objective, we created sets of orientations at a range of sizes for use in pattern indexing, and demonstrated that the number of orientation samples needed to achieve a desired indexing accuracy is significantly reduced as a consequence.

In addition to an exact calculation of the covering radius, which measures the quality of a set of orientations (smaller is better), we derived a lower bound on the covering radius, which sets a limit on the best attainable quality.  The difference between the achieved covering radius and the theoretical limit allows us to quantify the optimality of orientation sets, which we used to show that the sets we created are within $6\%$ of the optimal covering radius.

In order to use the method for indexing of diffraction patterns, we shown how symmetry groups can be imposed during orientation sampling, without introducing any edge-effect artifacts.  Using this approach we have demonstrated how to sample from the crystallographic fundamental zone of any of the 11 Laue groups.

Existing methods for sampling orientations have prioritized properties such as a refinable grid-like structure, fast generation, and the ability expand into spherical harmonics; we have instead chosen to optimize the maximum error (the covering radius) above all else.  This also means that the sampling method has very high computational requirements; the largest point set requires approximately 4 days of computation time.  Nonetheless, we claim that this is a good trade-off, since a point set must only be generated once for each desired error level, and affords a significant performance improvement every time a pattern is subsequently indexed.


\appendix
\section{Simplex Bound Derivation}
\label{app:simplex_bound_derivation}
\subsection{Volume of a Hyperspherical Cap}

The volume of a hyperspherical cap in \vectorspace{S}{d} can be calculated by projection into RF space.  Since RF space is radially symmetric about the origin, the projection of a hyperspherical cap with radius \radius and centre coordinates $\{1, 0, 0, \ldots, 0 \} \in \vectorspace{S}{d}$ is a sphere with radius $r = \tan(\radius)$ centred at the origin.  Thus, the volume of the cap is the radial integral of the product of the surface area of a $(d-1)$-sphere with the RF space density:
$$C_{d}(\radius) = \int\limits_{0}^{\tan(\radius)} \frac{S_{d-1}(r)}{(1 + r^2)^2} dr
\hspace{2mm}
\text{where}
\hspace{2mm}
S_{d-1}(\radius) = \frac{d\pi^{d/2}}{\Gamma \left(\frac{d}{2} + 1 \right)}\radius^{d-1}
$$
For a hyperspherical cap in \vectorspace{S}{3}, this gives:
$$
C_{3}(\radius) = \int\limits_{0}^{\tan(\radius)} \frac{4\pi r^{2}}{(1 + r^{2})^2} dr = \pi (2\radius - \sin(2\radius))
$$
This is the same result derived by Moriawiec~\shortcite{morawiec2003orientations,morawiec2010volume}, but without normalization.

\subsection{Edge Length of a Regular Spherical Tetrahedron in \vectorspace{S}{3}}

Due to the radial symmetry of RF space, the RF projection of a regular spherical tetrahedron with centre coordinates $q_c = \{1, 0, 0, 0\} \in \vectorspace{S}{3}$ is a tetrahedron with centre coordinates $v_c = \{0, 0, 0\}$ and vertex coordinates:
$$
\begin{array}{ll}
v_1 = \{k,k,k\}
&v_2 = \{k,-k,-k\}\\
v_3 = \{-k,k,-k\}
&v_4 = \{-k,-k,k\}
\end{array}
$$
From this, we obtain the vertex coordinates in \vectorspace{S}{3}:
$$
\begin{array}{ll}
q_1 = \frac{1}{\sqrt{1 + 3k^2}} \{1,  k,  k,  k\}
&q_2 = \frac{1}{\sqrt{1 + 3k^2}} \{1,  k, -k, -k\}\\
q_3 = \frac{1}{\sqrt{1 + 3k^2}} \{1, -k,  k, -k\}
&q_4 = \frac{1}{\sqrt{1 + 3k^2}} \{1, -k, -k,  k\}
\end{array}
$$
The circumradius of the tetrahedron is given by the arc length from the centre to any of the vertices:
\begin{equation}
\radius = \arccos \langle q_c, q_i \rangle = \arccos \left( \frac{1}{\sqrt{1 + 3k^2}} \right) \;\;\;\; \forall i
\label{eq:quat_tet_radius}
\end{equation}
The edge length of the tetrahedron is the arc length between any two vertices:
\begin{equation}
l = \arccos \langle q_i, q_j \rangle = \arccos \left( \frac{1 - k^2}{1 + 3k^2} \right) \;\;\;\; \forall i \neq j
\label{eq:quat_tet_edge_length}
\end{equation}
Using Equations~(\ref{eq:quat_tet_radius}) and (\ref{eq:quat_tet_edge_length}) we can express the edge length in terms of the radius:
\begin{equation}
l = \arccos\left( \frac{4 \cos^2(\radius) - 1}{3} \right)
\label{eq:radius_to_edge_length}
\end{equation}

\subsection{Dihedral Angle and Solid Angle of Intersection}

Let $\{q_1, q_2, q_3, q_4 \}$ be the vertices of a regular hyperspherical simplex in \vectorspace{S}{3} with the following coordinates:
$$
\begin{array}{ll}
q_1 = \left \{ 1,                          0, 0,  0  \right \}
&q_2 = \left \{ \cos l,  -a, \frac{-a}{\sqrt{3}}, z  \right \}\\
q_3 = \left \{ \cos l,   a, \frac{-a}{\sqrt{3}}, z  \right \}
&q_4 = \left \{ \cos l,   0, \frac{-2a}{\sqrt{3}}, z \right \}
\end{array}
$$
where:
$$
a = \sqrt{\frac{1 - \cos l}{2}}
\;\;\;\;\;\;\;\;
z = \sqrt{\sin^2 l - \frac{2}{3}(1 - \cos l)}
$$
When projected into RF space the tetrahedron has vertices:
$$
\begin{array}{ll}
v_1 = \left \{ 0, 0,  0  \right \}
&v_2 = \frac{1}{\cos l} \left \{ -a,  \frac{-a}{\sqrt{3}}, z  \right \}\\
v_3 = \frac{1}{\cos l} \left \{  a,  \frac{-a}{\sqrt{3}}, z  \right \}
&v_4 = \frac{1}{\cos l} \left \{  0, \frac{-2a}{\sqrt{3}}, z  \right \}
\end{array}
$$
The dihedral angle of the tetrahedron is then given by:
$$
\psi \left( l \right) = \arccos \left \langle \frac{v_2 \times v_3}{|v_2 \times v_3|}, \frac{v_2 \times v_4}{|v_2 \times v_4|} \right \rangle
= \arccos \left( \frac{\cos l}{2 \cos l + 1} \right)
$$
Using Equation~(\ref{eq:radius_to_edge_length}) we can express the dihedral angle in terms of \radius:
$$
\psi(\radius) = \arccos\left( \frac{4\cos^2(\radius) - 1}{8\cos^2(\radius) + 1} \right)
$$
The solid angle is then given by:
$$
\Omega(\radius) = 3\psi(\radius) - \pi = 3 \arccos\left( \frac{4\cos^2(\radius) - 1}{8\cos^2(\radius) + 1} \right) - \pi
$$
Since $v_1$ lies at the origin, this is also the solid angle of intersection of a regular hyperspherical simplex and a hyperspherical cap placed at one of its vertices.  We can verify that in the Euclidean limit (where the curvature is zero),
$\lim_{\radius \rightarrow 0} \Omega(\radius) = 3 \arccos\left( \frac{1}{3} \right) - \pi = \arccos \left( \frac{23}{27} \right)$, which is the solid angle for a regular tetrahedron in \vectorspace{R}{3}, 
and that
$
\Omega \left( \frac{\pi}{3} \right) = \frac{\pi}{2}
$
which is the solid angle of a tetrahedral cell in the 16-cell.

\section{Laue Group Subset Relationships}
\label{sec:laue_tables}

The subset relationships between the 11 Laue groups are shown in Tables~\ref{table:cubic_generators} and \ref{table:hexagonal_generators}.

\begin{table}
\label{table:cubic_generators}
\hcaption{Generators for the seven Laue groups which are subsets of $O$.}
\begin{tabular}{lccrccccccc}
 & $O$ & $T$ & $D_4$ & $D_2$ & $C_4$ & $C_2$ & $C_1$\\
\hline
$\left\{ 1, 0, 0, 0 \right\}$ & \cmark & \cmark & \cmark & \cmark & \cmark & \cmark & \cmark\\
$\left\{ 0, 0, 0, 1 \right\}$ & \cmark & \cmark & \cmark & \cmark & \cmark & \cmark & \\
$\left\{ 0, 1, 0, 0 \right\}$ & \cmark & \cmark & \cmark & \cmark &  &  & \\
$\left\{ 0, 0, 1, 0 \right\}$ & \cmark & \cmark & \cmark & \cmark &  &  & \\
$\left\{ \frac{\sqrt{2}}{2}, 0, 0, \frac{\sqrt{2}}{2} \right\}$ & \cmark &  & \cmark &  & \cmark &  & \\
$\left\{ \frac{\sqrt{2}}{2}, 0, 0, -\frac{\sqrt{2}}{2} \right\}$ & \cmark &  & \cmark &  & \cmark &  & \\
$\left\{ 0, \frac{\sqrt{2}}{2}, \frac{\sqrt{2}}{2}, 0 \right\}$ & \cmark &  & \cmark &  &  &  & \\
$\left\{ 0, -\frac{\sqrt{2}}{2}, \frac{\sqrt{2}}{2}, 0 \right\}$ & \cmark &  & \cmark &  &  &  & \\
$\left\{ \frac{1}{2}, \frac{1}{2}, -\frac{1}{2}, \frac{1}{2} \right\}$ & \cmark & \cmark &  &  &  &  & \\
$\left\{ \frac{1}{2}, \frac{1}{2}, \frac{1}{2}, -\frac{1}{2} \right\}$ & \cmark & \cmark &  &  &  &  & \\
$\left\{ \frac{1}{2}, \frac{1}{2}, -\frac{1}{2}, -\frac{1}{2} \right\}$ & \cmark & \cmark &  &  &  &  & \\
$\left\{ \frac{1}{2}, -\frac{1}{2}, -\frac{1}{2}, -\frac{1}{2} \right\}$ & \cmark & \cmark &  &  &  &  & \\
$\left\{ \frac{1}{2}, -\frac{1}{2}, \frac{1}{2}, \frac{1}{2} \right\}$ & \cmark & \cmark &  &  &  &  & \\
$\left\{ \frac{1}{2}, -\frac{1}{2}, \frac{1}{2}, -\frac{1}{2} \right\}$ & \cmark & \cmark &  &  &  &  & \\
$\left\{ \frac{1}{2}, -\frac{1}{2}, -\frac{1}{2}, \frac{1}{2} \right\}$ & \cmark & \cmark &  &  &  &  & \\
$\left\{ \frac{1}{2}, \frac{1}{2}, \frac{1}{2}, \frac{1}{2} \right\}$ & \cmark & \cmark &  &  &  &  & \\
$\left\{ \frac{\sqrt{2}}{2}, \frac{\sqrt{2}}{2}, 0, 0 \right\}$ & \cmark &  &  &  &  &  & \\
$\left\{ \frac{\sqrt{2}}{2}, -\frac{\sqrt{2}}{2}, 0, 0 \right\}$ & \cmark &  &  &  &  &  & \\
$\left\{ \frac{\sqrt{2}}{2}, 0, \frac{\sqrt{2}}{2}, 0 \right\}$ & \cmark &  &  &  &  &  & \\
$\left\{ \frac{\sqrt{2}}{2}, 0, -\frac{\sqrt{2}}{2}, 0 \right\}$ & \cmark &  &  &  &  &  & \\
$\left\{ 0, \frac{\sqrt{2}}{2}, 0, \frac{\sqrt{2}}{2} \right\}$ & \cmark &  &  &  &  &  & \\
$\left\{ 0, -\frac{\sqrt{2}}{2}, 0, \frac{\sqrt{2}}{2} \right\}$ & \cmark &  &  &  &  &  & \\
$\left\{ 0, 0, \frac{\sqrt{2}}{2}, \frac{\sqrt{2}}{2} \right\}$ & \cmark &  &  &  &  &  & \\
$\left\{ 0, 0, -\frac{\sqrt{2}}{2}, \frac{\sqrt{2}}{2} \right\}$ & \cmark &  &  &  &  &  & \\
\end{tabular}
\end{table}

\begin{table}
\hcaption{Generators for the five Laue groups which are subsets of $D_6$.}
\begin{tabular}{lccccccc}
 & $D_6$ & $D_3$ & $C_6$ & $C_3$ & $C_1$\\
\hline
$\left\{ 1, 0, 0, 0 \right\}$ & \cmark & \cmark & \cmark & \cmark & \cmark\\
$\left\{ \frac{1}{2}, 0, 0, \frac{\sqrt{3}}{2} \right\}$ & \cmark & \cmark & \cmark & \cmark & \\
$\left\{ \frac{1}{2}, 0, 0, -\frac{\sqrt{3}}{2} \right\}$ & \cmark & \cmark & \cmark & \cmark & \\
$\left\{ 0, 0, 0, 1 \right\}$ & \cmark &  & \cmark &  & \\
$\left\{ \frac{\sqrt{3}}{2}, 0, 0, \frac{1}{2} \right\}$ & \cmark &  & \cmark &  & \\
$\left\{ \frac{\sqrt{3}}{2}, 0, 0, -\frac{1}{2} \right\}$ & \cmark &  & \cmark &  & \\
$\left\{ 0, 1, 0, 0 \right\}$ & \cmark & \cmark &  &  & \\
$\left\{ 0, -\frac{1}{2}, \frac{\sqrt{3}}{2}, 0 \right\}$ & \cmark & \cmark &  &  & \\
$\left\{ 0, \frac{1}{2}, \frac{\sqrt{3}}{2}, 0 \right\}$ & \cmark & \cmark &  &  & \\
$\left\{ 0, \frac{\sqrt{3}}{2}, \frac{1}{2}, 0 \right\}$ & \cmark &  &  &  & \\
$\left\{ 0, -\frac{\sqrt{3}}{2}, \frac{1}{2}, 0 \right\}$ & \cmark &  &  &  & \\
$\left\{ 0, 0, 1, 0 \right\}$ & \cmark &  &  &  & \\
\end{tabular}
\label{table:hexagonal_generators}
\end{table}



\ack{Acknowledgements}
The authors thank Thomas J.~Hardin for advice on spherical and hyperspherical harmonics, Farangis Ram for discussions on the cubochoric method, Nanna Wahlberg, Erik B.~Knudsen and Hugh Simons for proofreading and helpful suggestions on the manuscript, and an anonymous referee for several suggestions which have improved this work.


\referencelist[refs.bib]


\end{document}